\newtheorem{lemma}{Lemma}
\newtheorem{theorem}{Theorem}
\newtheorem{corollary}{Corollary}
\newtheorem{example}{Example}
\begin{document}

\title{Lowering the Error Floor of LDPC Codes Using Cyclic Liftings}
\author{Reza ~Asvadi\authorrefmark{1},
        Amir ~H. Banihashemi\authorrefmark{2},~\IEEEmembership{Senior~Member,~IEEE,}
        \\ and Mahmoud Ahmadian-Attari\authorrefmark{1}

\authorblockA{\authorrefmark{1}Department of Electrical and Computer Engineering\\
K. N. ~Toosi University of Technology, Tehran, Iran\\
 Emails: asvadi@ee.kntu.ac.ir, mahmoud@eetd.kntu.ac.ir}\\
\authorblockN{\authorrefmark{2}Department of Systems and Computer Engineering, Carleton
University,\\ Ottawa, ON K1S 5B6, Canada\\
 Email: ahashemi@sce.carleton.ca}}


\maketitle

\renewcommand{\QED}{\QEDopen}

\begin{abstract}
Cyclic liftings are proposed to lower the error floor of low-density
parity-check (LDPC) codes. The liftings are designed to eliminate dominant
trapping sets of the base code by removing the short cycles
which form the trapping sets. We derive a necessary and sufficient condition
for the cyclic permutations assigned to the edges of a cycle $c$
of length $\ell(c)$ in the base graph such that the inverse image
of $c$ in the lifted graph consists of only cycles of length strictly
larger than $\ell(c)$. The proposed method is universal in the sense that it
can be applied to any LDPC code over any channel and for any iterative decoding algorithm.
It also preserves important properties of the base code such as
degree distributions, encoder and decoder structure, and in some cases,
the code rate.
The proposed method is applied to both structured and random codes
over the binary symmetric channel (BSC). The error floor improves consistently
by increasing the lifting degree, and the results show significant improvements
in the error floor compared to the base code, a random code of the
same degree distribution and block length, and a random lifting
of the same degree. Similar improvements are also observed when the
codes designed for the BSC are applied to the additive white
Gaussian noise (AWGN) channel.
\end{abstract}

\begin{keywords}
Low-density parity-check (LDPC) codes, trapping sets, error floor,
graph lifting, cyclic lifting, graph covering.
\end{keywords}


\section{Introduction}


Low-density parity-check (LDPC) codes~\cite{Gal-63} have emerged as
one of the top contenders for capacity approaching error correction
over many important channels. They not only perform superbly but
also lend themselves well to highly efficient parallel decoding
algorithms. A well-known construction of LDPC codes is based on {\em
protographs}, also referred to as {\em base graphs} or {\em
projected graphs}~\cite{RU-Book}. In such constructions, a bipartite
base graph $G$ is copied $N$ times and for each edge $e$ of $G$, a
permutation is applied to the $N$ copies of $e$ to interconnect the
$N$ copies of $G$. The resulting graph, called the {\em $N$-cover}
or the {\em $N$-lifting} of $G$, is then used as the Tanner
graph~\cite{Tanner-81} of the LDPC code. If the permutations are
cyclic, the resulting LDPC code is called {\em quasi-cyclic} (QC).
QC LDPC codes are attractive due to their simple implementation and
analysis~\cite{RU-Book}.

At very large block lengths, the performance of LDPC codes can be well
estimated using asymptotic techniques such as density evolution~\cite{RU-Capacity}. At finite
lengths, however, our understanding of the dynamics of iterative decoding
algorithms is limited. In particular, iteratively decoded
finite-length codes demonstrate an abrupt change in their error rate curves,
referred to as {\em error floor}, in the high signal to noise ratio (SNR) region. The analysis of
the error floor and techniques to improve the error floor performance
of LDPC codes are still very active areas of research. For the binary
erasure channel (BEC), the error floor is well understood and is known
to be caused by graphical structures called {\em stopping sets}~\cite{Di-02}.
Richardson related the error rate performance of LDPC codes on
the binary symmetric channel (BSC) and the additive white Gaussian noise (AWGN)
channel to more general graphical structures, called {\em trapping sets},
and devised a technique to estimate the error floor~\cite{R-03}. Other estimation techniques
based on finding the dominant trapping sets were also proposed
for the BSC in~\cite{CSV-06} and for the AWGN channel in~\cite{Cole-06},~\cite{SC-06}.
In \cite{XB-07} - \cite{XB-09}, Xiao and Banihashemi took a different approach, and
instead of focusing on trapping sets
which are the eventual result of the decoder failure, focussed on
the input error patterns that cause the decoder to fail. A simple
technique for estimating the frame error rate (FER) and the bit error
rate (BER) of finite-length LDPC codes over the BSC was developed
in~\cite{XB-07}. The complexity of this algorithm was then reduced
in~\cite{XB-09}, and the estimation technique was extended to the AWGN channel with
quantized output in~\cite{XB-08}. More recent work on the estimation of
the error floor of LDPC codes is presented in~\cite{CCSV-09},~\cite{DLZANW-09},
to which the reader is
also referred for a more comprehensive list of references.

There is extensive literature on reducing the error floor of finite-length
LDPC codes over different channels and for different iterative decoding
algorithms. One category of such literature, focusses on modification of
iterative decoding algorithms, see, e.g.,~\cite{HR-09}, while another category
is concerned with the code construction. In the second category, some
researchers use indirect measures such as girth~\cite{TSF-01} or approximate
cycle extrinsic message degree (ACE)~\cite{VS-09}, while
others work with direct measures of error floor performance
such as the distribution of stopping sets or trapping sets~\cite{Wang-06},~\cite{ICV-08},~\cite{JMSZ-09}.
In~\cite{Wang-06}, {\em edge swapping} is proposed as a technique to increase the
stopping distance of an LDPC code, and thus to
improve its error floor performance over the BEC. Random
cyclic liftings are also studied in~\cite{Wang-06} and shown to improve
the average performance of the ensemble in the error floor
region compared to the base code.
Ivkovic {\em et al.}~\cite{ICV-08} apply the same technique of edge swapping
between two copies of a base LDPC code to eliminate
the dominant trapping sets of the base code over the BSC.

In the approach proposed here also, we focus on dominant trapping sets which
are the main contributors to the error floor.
We start from the code whose error floor is to be improved,
as the base code. We then construct a new code by cyclically lifting
the base code. The lifting is designed carefully to eliminate the
dominant trapping sets of the base graph. This is achieved by removing
the short cycles which form the dominant trapping sets.
Our work has similarities to~\cite{ICV-08} and~\cite{Wang-06}.
The similarity with both~\cite{ICV-08} and~\cite{Wang-06} is that we also use graph covers or liftings to
improve the error floor performance of a base code. It however differs from~\cite{ICV-08} in
that we restrict ourselves to cyclic liftings that are advantageous
in implementation. Moreover, to eliminate the dominant trapping sets
we use a different approach than the one in~\cite{ICV-08}. More specifically,
our approach is based on the elimination of the short cycles involved
in the trapping sets. To do so, we derive a necessary and sufficient condition
for the problematic cycles of the base code such that they are mapped to strictly
larger cycles in the lifted code. The difference with~\cite{Wang-06} is that
while~\cite{Wang-06} is focused on
the {\em ensemble} performance of {\em random} liftings, our work is concerned
with the {\em intentional} design of a {\em particular} cyclic lifting.

Given a base code and its dominant trapping sets over a certain channel
and under a specific iterative decoding algorithm, the proposed construction
can lower the error floor by increasing the block length while preserving
the important properties of the base code such as degree distributions, and the
encoder and decoder structure. The code rate is also
preserved or is decreased slightly depending on the rank deficiency of
the parity-check matrix of the base code. Moreover, the cyclic
nature of the lifting makes it implementation friendly. We apply the proposed
construction to the Tanner code~\cite{TSF-01} and two randomly constructed codes,
one regular and the other irregular, to improve
the error floor performance of Gallager A/B algorithms over the BSC.\footnote{The
choice of BSC/Gallager algorithms is for simplicity,
and the proposed construction is applicable to any channel/decoding algorithm
combination as long as the dominant trapping sets are known.}
Simulation results show a consistent improvement in the error floor performance by
increasing the degree of liftings. The constructed codes are far superior to
similar random codes or codes constructed by random liftings in the error floor region.
We also examine the performance of the codes constructed for
BSC/Gallager B algorithm, over the AWGN channel with min-sum decoding
and observe similar improvements in the error floor performance.

The remaining of the paper is organized as follows: Section
\ref{sec::prelimin} introduces definitions, notations and
background material used throughout the paper. In Section III,
the proposed construction is explained and discussed. Numerical results are
presented in Section \ref{sec::simul}, and finally Section
\ref{sec::conclusion} concludes the paper.

\section{PRELIMINARIES: LDPC CODES, TANNER GRAPHS, GRAPH LIFTINGS
AND TRAPPING SETS}
\label{sec::prelimin}
\subsection{LDPC Codes and Tanner Graphs}
Consider a binary LDPC code ${\cal C}$ represented by a Tanner graph
$G = (V_b \cup V_c , E)$, where $V_b = \{b_1, \ldots, b_n\}$ and $V_c = \{c_1, \ldots, c_m\}$
are the sets of variable nodes and check nodes, respectively, and $E$ is the set of edges.
Corresponding to $G$, we have an $m \times n$ parity-check matrix
$H=[h_{ij}]$ of ${\cal C}$, where $h_{ij} = 1$ if and only if (iff) the
node $c_i \in V_c$ is connected to the node $b_j \in V_b$ in $G$;
or equivalently, iff $\{b_j,c_i\} \in E$. If all the nodes in the
set $V_b$ have the same degree $d_v$ and all the nodes
in the set $V_c$ have the same degree $d_c$, the corresponding
LDPC code is called a {\em regular} $(d_v,d_c)$ code. Otherwise,
it is called {\em irregular}.

A subgraph of $G$ is a {\em path} of length $k$ if
it consists of a sequence of $k+1$ nodes $\{u_1,\ldots,u_{k+1}\}$
and $k$ distinct edges $\{ \{u_i,u_{i+1}\} : i=1,\ldots,k \}$. We say two
nodes are {\em connected} if there is a path between them.
A path is a {\em cycle} if $u_1 = u_{k+1}$, and all the other nodes are distinct. The length of the
shortest cycle(s) in the graph is called {\em girth}. In bipartite graphs, including
Tanner graphs, all cycles have even lengths. So, the girth is an even number.

\subsection{Graph Liftings}
\label{subsec2.2}
Consider the set of all possible permutations $S_N$ over the set of
integer numbers $Z_{1 \rightarrow N} \stackrel{\Delta}{=} \{1, \ldots , N\}$.
This set forms a group, known as the {\em symmetric group}, under composition.
Each element $\pi \in S_N$
can be represented by all the values $\pi(i),\: i \in Z_{1 \rightarrow N}$.
For the identity element $\pi_0$, we have $\pi_0(i)=i, \forall i$, and the
inverse of $\pi$ is denoted by $\pi^{-1}$ and defined as $\pi^{-1}(\pi)=\pi_0$.
It is easy to see that the symmetric group is not Abelian.
An alternate representation of permutations is to represent a permutation
$\pi$ with an $N \times N$ matrix $\Pi = [\pi_{ij}]$, whose elements are
defined by $\pi_{ij} = 1$ if $j = \pi(i)$, and $\pi_{ij} = 0$, otherwise.
As a result, we have the isomorphic group of all $N \times N$
permutation matrices with the group operation defined as matrix
multiplication, and the identity element equal to the identity
matrix $I_N$. In particular, corresponding to the composition
$\pi'(\pi)$, we have the matrix multiplication $\Pi \times \Pi'$.
Moreover, since the permutation matrices are orthogonal,
the inverse of a permutation matrix is its transpose,
i.e., $\Pi^{-1} = \Pi^T$.

Consider the cyclic subgroup $C_N$ of $S_N$ consisting of the $N$ circulant
permutations defined by $\pi_d(i) = i+d \:\:\:\;\mbox{mod}\:\:\:\; N + 1, \: d \in Z_{0 \rightarrow N-1}$.
The permutation $\pi_d$ corresponds to $d$ cyclic shifts to the right.
In the matrix representation, permutation $\pi_d$ corresponds to a permutation matrix
whose rows are obtained by cyclically shifting all the rows of the identity matrix $I_N$ by $d$
to the right. This matrix is denoted by $I^{(d)}$.
Note that $I^{(0)} = I_N$. For the composition of permutations,
we have
\begin{equation}
I^{(d_1)} \times I^{(d_2)} = I^{(d_1 + d_2 \:\mbox{mod}\; N)}\:.
\label{eqa}
\end{equation}

Clearly, $C_N$ can be generated by $I^{(1)}$, where each element $I^{(d)}, \: d \in Z_{0 \rightarrow N-1}$
of $C_N$ is the $d$-th power of $I^{(1)}$. This defines a natural isomorphism
between $C_N$ and the set of integers modulo $N$, $Z_{0 \rightarrow N-1}$, under addition.
The latter group is denoted by $Z_N$.

Consider the following construction of a graph $\tilde{G} = (\tilde{V},\tilde{E})$
from a graph $G = (V,E)$: We first make $N$ copies of $G$ such that for each node $v \in V$,
we have $N$ copies $\tilde{v} \stackrel{\Delta}{=} \{v_1 \ldots, v_N\}$ in $\tilde{V}$.
For each edge $e = \{u , v\} \in E$, we assign
a permutation $\pi^e \in S_N$ to the $N$ copies of $e$ in $\tilde{E}$ such that
an edge $\{u_i , v_j\}$ belongs to $\tilde{E}$ iff $\pi^e(i)=j$. The set
of these edges is denoted by $\tilde{e}$. The graph
$\tilde{G}$ is called an $N$-{\em cover} or an $N$-{\em lifting} of $G$, and $G$ is referred to
as the {\em base graph}, {\em protograph} or {\em projected graph} corresponding to $\tilde{G}$.
We also call the application of a permutation $\pi^e$ to the $N$ copies of $e$,
{\em edge swapping}, high lighting the fact that the permutation
swaps edges among the $N$ copies of the base graph.

In this work, $G$ is a Tanner graph, and we define the edge permutations from
the variable side to the check side, i.e., the set of edges $\tilde{e}$ in $\tilde{E}$ corresponding to
an edge $e = \{b,c\} \in E$ are defined by $\{b_i,c_{\pi^e(i)}\}, \: i \in Z_{1 \rightarrow N}$.
Equivalently, $\tilde{e}$ can be described by $\{b_{(\pi^e)^{-1}(j)},c_j\}, \: j \in Z_{1 \rightarrow N}$.
Our focus in this paper is on {\em cyclic liftings} of $G$, where the edge permutations
are selected from $C_N$, or equivalently $Z_N$. Thus the nomenclature {\em cyclic edge swapping}.
In this case, if $I^{(d)}$ is a permutation matrix from variable nodes to check nodes,
$I^{(d')};\:d'= N-d\:\:\mbox{mod}\:\:N$, will be the corresponding permutation matrix from check nodes
to variable nodes. It is important to distinguish between the two cases when we
compose permutations on a directed path.

To the lifted graph $\tilde{G}$, we associate an
LDPC code $\tilde{\cal{C}}$, referred to as the {\em lifted code},
such that the $mN \times nN$ parity-check matrix $\tilde{H}$ of $\tilde{\cal{C}}$
is equal to the adjacency matrix of $\tilde{G}$. More specifically, $\tilde{H}$
consists of $m \times n$ sub-matrices $[\tilde{H}]_{ij}\:, 1 \leq i \leq m, \: 1 \leq j \leq n$,
arranged in $m$ rows and $n$ columns. The
sub-matrix $[\tilde{H}]_{ij}$ in row $i$ and column $j$ is the permutation matrix from $C_N$
corresponding to the edge $\{b_j,c_i\}$ where $h_{ij} \neq 0$; otherwise, $[\tilde{H}]_{ij}$ is the all-zero matrix.
Let the $m \times n$ matrix $D=[d_{ij}]$ be defined
by $[\tilde{H}]_{ij} = I^{(d_{ij})},\:d_{ij} \in Z_N$ if $h_{ij} \neq 0$, and $d_{ij} = +\infty$, otherwise.
Matrix $D$, called the matrix of {\em edge permutation indices}, fully describes $\tilde{H}$ and thus the
cyclically lifted code $\tilde{\cal{C}}$.

\subsection{Trapping Sets and Error Floor}
It is well-known that trapping sets are the culprits in the error floor region of
iterative decoding algorithms. An $(a,b)$ trapping set is defined as a set of $a$
variable nodes which have $b$ check nodes of odd degree in their induced subgraph.
Among trapping sets, the most harmful ones are called {\em dominant}.
Trapping sets depend not only on the Tanner graph of the code but also
on the channel and the iterative decoding algorithm. In general, finding
all the trapping sets is a hard problem, and one often needs to resort to
efficient search techniques to obtain the dominant trapping
sets~\cite{R-03},~\cite{Cole-06},~\cite{XB-09},~\cite{WKP-09}.
Trapping sets for Gallager A/B algorithms over the BSC are examined for a number of
LDPC codes in~\cite{CSV-06},~\cite{XB-09}. In this work, we assume that the
dominant trapping sets are known and
available.

In the context of symmetric decoders over the BSC, the error floor of FER
can be estimated by~\cite{XB-07}
\begin{equation}
FER \approx N_J \:\epsilon^J\:,
\label{eqsd}
\end{equation}
where $\epsilon$ is the channel crossover probability, and
$J$ and $N_J$ are the size and the number of the smallest error patterns
that the decoder fails to correct. From (\ref{eqsd}), it is clear that the
dominant trapping sets over the BSC are those caused by the minimum
number of initial errors. (In~\cite{CSV-06},~\cite{ICV-08}, parameter $J$ in (\ref{eqsd})
is called {\em minimum critical number}.) In the double-logarithmic plane,
one can see from (\ref{eqsd}) that $\log(FER)$ decreases linearly with
$\log(\epsilon)$ and the slope of the line is determined by $J$.

\section{DESIGN OF CYCLIC LIFTINGS TO ELIMINATE TRAPPING SETS}

In this work, our focus is on the design of cyclic liftings of a given Tanner graph
to eliminate its dominant trapping sets with respect to a given channel/decoding algorithm
with the purpose of reducing the error floor.
(This is equivalent to the design of non-infinity edge permutation indices of matrix $D$.)
For example, Equation (\ref{eqsd})
indicates that for improving the error floor on the BSC, one needs
to increase $J$ and decrease $N_J$ corresponding to the dominant trapping sets.
In particular, while increasing $J$ for dominant trapping sets increases the slope of
$\log(\mbox{FER})$ vs. $\log(\epsilon)$ at low channel crossover
probabilities $\epsilon$, reducing $N_J$ amounts to a downward shift of the curve.
So, the general idea is to primarily increase $J$ by eliminating the trapping
sets with the smallest critical number, and the secondary goal is then
to decrease $N_J$.

It is well-known that dominant trapping sets composed of short cycles~\cite{CSV-06},~\cite{XB-09}.
To eliminate the trapping sets, we thus aim at eliminating their constituent cycles
in the lifted graph. In the following, we examine the inverse image of a (base)
cycle in the cyclically lifted graph.

\subsection{Cyclic Liftings of Cycles}
\label{subsec3.1}
\begin{lemma}
Consider a cyclic $N$-lifting $\tilde{G}$ of a Tanner graph $G$.
Consider a path $\xi$ of length $\ell$ in $G$, which starts from a variable node
$b$ and ends at a variable node $b'$ with the sequence of edges
$e_1, \ldots, e_{\ell}$. Corresponding to the edges, we have the sequence
of permutation matrices $I^{(d_1)},\ldots,I^{(d_\ell)}$. Then
the permutation matrix that maps $\tilde{b}$ to $\tilde{b'}$
through the path $\tilde{\xi}$ is $I^{(d)}$, where
\begin{equation}
d=\sum_{i=0}^{\ell-1} (-1)^i d_{i+1} \:\mbox{mod}\: N\:.
\label{eq2}
\end{equation}
\label{lem1}
\end{lemma}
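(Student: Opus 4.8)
The plan is to track how a single node index propagates along the lifted path $\tilde{\xi}$ by composing the edge permutations one edge at a time, being careful about the direction in which each edge is traversed. Recall that the convention fixed in Section~\ref{subsec2.2} is that a permutation matrix $I^{(d)}$ attached to an edge $e=\{b,c\}$ describes the map from the variable side to the check side, so that the copy $b_i$ is joined to $c_{\pi_d(i)}$ with $\pi_d(i)=i+d \bmod N$; traversing the same edge from the check side back to the variable side therefore corresponds to the inverse permutation $I^{(-d)}=I^{(N-d\,\bmod\,N)}$. Along the path $\xi = e_1,\ldots,e_\ell$ starting and ending at variable nodes, the edges alternate between ``variable-to-check'' steps (the odd-numbered edges $e_1,e_3,\ldots$) and ``check-to-variable'' steps (the even-numbered edges $e_2,e_4,\ldots$). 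Hence the odd edges contribute $+d_i$ to the running index shift and the even edges contribute $-d_i$.

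I would make this precise by induction on $\ell$. For $\ell=1$, the path is a single variable-to-check edge $e_1$, and by definition the map $\tilde b \to \tilde{b'}$ (here $b'$ is actually a check node only if $\ell$ is odd; more cleanly, one states the lemma for the node reached after $\ell$ edges and notes $\ell$ must be even for $b'$ to be a variable node, but the same bookkeeping applies) is exactly $I^{(d_1)}$, matching $d=\sum_{i=0}^{0}(-1)^i d_{i+1} = d_1$. For the inductive step, suppose the composite permutation along $e_1,\ldots,e_{\ell-1}$ is $I^{(d')}$ with $d'=\sum_{i=0}^{\ell-2}(-1)^i d_{i+1}\bmod N$. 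Appending the edge $e_\ell$: if $\ell$ is even this is a check-to-variable traversal contributing $I^{(-d_\ell)}$, and using the composition rule~(\ref{eqa}) the total is $I^{(d')} \times I^{(-d_\ell)} = I^{(d'-d_\ell \bmod N)}$, and since $\ell$ even means $(-1)^{\ell-1}=-1$ this is precisely $I^{(d)}$ with $d$ as in~(\ref{eq2}); the odd case is identical with a $+d_\ell$. Equation~(\ref{eqa}) guarantees that composing circulant shifts just adds the shift amounts modulo $N$, which is what makes the alternating sum collapse cleanly.

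The main subtlety — and the one place the argument could go wrong if handled carelessly — is the direction convention and the resulting sign. One must consistently use that a right-cyclic-shift permutation assigned variable-to-check becomes its inverse (equivalently $I^{(N-d)}$) when the edge is walked check-to-variable, and verify that the parity of the edge's position in the path correctly determines whether it is being traversed ``forwards'' or ``backwards.'' Since $\xi$ begins at a variable node, edge $e_i$ is traversed variable-to-check exactly when $i$ is odd, which is why the sign is $(-1)^{i-1}=(-1)^{i+1}$ on $d_i$, i.e. $(-1)^{i}$ on $d_{i+1}$ in the zero-indexed sum of~(\ref{eq2}). Everything else — associativity of composition, the isomorphism $C_N\cong Z_N$, and the additive composition law — is routine and already recorded in the excerpt, so the proof is essentially this induction plus a careful statement of the traversal convention.
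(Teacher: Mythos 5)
Your proof is correct and takes essentially the same approach as the paper: the paper's own (very brief) argument also multiplies the directed edge permutations along the path, using $I^{(N-d_i\bmod N)}$ for the check-to-variable traversals, and then collapses the product via the composition rule~(\ref{eqa}). Your inductive phrasing and the remark that $\ell$ must be even for $b'$ to be a variable node are just a more explicit spelling-out of the same bookkeeping.
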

\begin{proof} The permutation matrix that maps $\tilde{b}$ to $\tilde{b'}$
is obtained by multiplying the permutation matrices of the ``directed" edges
along the path. This results in $I^{(d)} = I^{(d_1)} \times I^{(N - d_2 \:\mbox{mod}\:N)} \times
\cdots \times I^{(N - d_\ell\:\mbox{mod}\:N)}$. The lemma is then proved using (\ref{eqa}).
\end{proof}

The value of $d$ given in (\ref{eq2}) is called the {\em permutation index} of the
path from $b$ to $b'$. Clearly, the permutation index of the path from $b'$ to $b$ is equal to
$d' = N-d\:\:\mbox{mod}\:\:N$. If $b = b'$ and all the other nodes are distinct, then the path will become a cycle
and depending on the direction of the cycle,
its permutation index will be equal to $d$ or $d'$.

\begin{theorem} \label{thm1}
Consider the cyclic $N$-lifting $\tilde{G}$ of the Tanner graph $G$.
Suppose that $c$ is a cycle of length $\ell$ in $G$. The inverse image of
$c$ in $\tilde{G}$ is then the union of $N/k$ cycles, each of length $k\ell$,
where $k$ is the order of the element(s) of $Z_N$ corresponding to the
permutation indices of $c$.
\end{theorem}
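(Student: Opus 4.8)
The plan is to trace a vertex in the fiber above a fixed base vertex of $c$ around and around the lifted copies of $c$, and see how many times we must traverse $c$ before returning to the starting vertex. Fix an orientation of $c$, a variable node $b$ on $c$, and write $d$ for the permutation index of $c$ (in that orientation) as defined after Lemma~\ref{lem1}. Consider the lifted graph $\tilde G$ restricted to the inverse image of $c$. Starting at the copy $b_i$ and walking once around the lifted copy of $c$, Lemma~\ref{lem1} tells us we arrive at $b_{\pi_d(i)}$, where $\pi_d$ is the circulant permutation with index $d$; equivalently, in the $Z_N$ picture, $i \mapsto i + d \bmod N$. Hence walking around $t$ times sends $b_i$ to $b_{i + td \bmod N}$.

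The key step is then a counting argument. The closed walk in $\tilde G$ that begins at $b_i$ and follows the lifted copy of $c$ returns to $b_i$ for the first time after exactly $t$ traversals, where $t$ is the least positive integer with $td \equiv 0 \pmod N$; by definition this $t$ is the order $k$ of $d$ in $Z_N$. Moreover, for each $t < k$ the intermediate vertex $b_{i+td}$ is distinct from $b_i$ and — because it lies in a different fiber of the covering restricted to each base node of $c$ — from all the intermediate vertices visited in earlier passes. So the walk closes up after $k$ passes without repeating a vertex, which is precisely a cycle of length $k\ell$ in $\tilde G$. I would note here that one should check $b_{i+td}$ for $0 \le t < k$ are pairwise distinct as vertices of $\tilde G$; this follows since $td \bmod N$ takes $k$ distinct values (the coset of $\langle d\rangle$ through $i$ has size $k$), and the covering map is locally injective on each fiber.

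Finally, these cycles partition the inverse image of $c$. The vertex set of the inverse image, restricted to the fiber over $b$, is $\{b_1,\dots,b_N\}$, and the construction above groups these into orbits of the map $i \mapsto i+d \bmod N$, i.e., the cosets of the subgroup $\langle d\rangle$ of $Z_N$. There are $N/k$ such cosets, each giving rise to exactly one cycle of length $k\ell$, and every edge of the inverse image lies on exactly one of these cycles (each lifted copy of an edge of $c$ appears once). Hence the inverse image of $c$ is the disjoint union of $N/k$ cycles of length $k\ell$, as claimed.

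I expect the main obstacle to be the bookkeeping in the second step: making precise that traversing the lifted copy of $c$ fewer than $k$ times cannot close the walk early (this is where the definition of $k$ as the \emph{order} of the permutation index is essential, and where one must rule out premature coincidences among intermediate vertices), and handling the case where $c$'s two orientations yield permutation indices $d$ and $N-d$ — these generate the same subgroup and hence the same $k$, so the statement is orientation-independent, but this should be remarked.
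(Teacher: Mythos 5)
Your proof is correct and follows essentially the same approach as the paper's: both fix a starting vertex in the fiber over a variable node of $c$, apply Lemma~\ref{lem1} to see that one traversal acts as $i\mapsto i+d\bmod N$, identify the first return time with the order $k$ of $d$ in $Z_N$, and then partition the inverse image into $N/k$ cycles according to the cosets of $\langle d\rangle$. Your version is slightly more careful than the paper in two places the paper treats as obvious -- explicitly checking that the closed walk of length $k\ell$ has no premature vertex repetitions by appealing to the fiber structure, and remarking that the two orientations of $c$ give permutation indices $d$ and $N-d$ of equal order -- but these are refinements of the same argument, not a different route.
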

\begin{proof}
We first note that the elements of $Z_N$ corresponding to both permutation indices of $c$
have the same order. Suppose that the permutation indices of $c$ in the two directions are
equal to $d$ and $d' = N-d\:\:\mbox{mod}\:\:N$. Now, $d'$ is the inverse of $d$ in $Z_N$,
and thus has the same order as $d$.

Denote the sequence of variable and check nodes in $c$ by $b_{i_1}, c_{i_2}, b_{i_3}, c_{i_4},
\ldots, b_{i_{\ell - 1}}, c_{i_{\ell}}, b_{i_1}$.
Starting from any of the $N$ variable nodes
in $\tilde{b_{i_1}}$, say $b_{i_{1}j}, \: 1 \leq j \leq N$, without loss of generality,
we follow one of the two paths of length $\ell$ in the inverse
image of $c$ in $\tilde{G}$, which corresponds to the direction on $c$ associated with the permutation index $d$.
As a result, based on Lemma~\ref{lem1}, we will end up at the variable node $b_{i_{1}p}$ in $\tilde{b_{i_1}}$, where
$p = j + d \:\:\:\mbox{mod}\:\:\:N$ and $d$ is given in (\ref{eq2}).
If $d = 0$, then the path ends at $b_{i_{1}j}$,
meaning that the inverse image of $c$ which passes through $b_{i_{1}j}$ is a cycle of length $\ell$.
Similarly one can see that the inverse image of $c$ passing through any of the $N$
nodes $b_{i_{1}1},\ldots,b_{i_1N}$ is a cycle of length $\ell$, and since these cycles do not overlap,
the inverse image of $c$ in this case is the union of $N$ cycles, each of length $\ell$.
This corresponds to the case where the element of $Z_N$ corresponding to the
permutation index of $c$ is $0$, which has order $k = 1$.
For the cases where $d \neq 0$, continuing along the path and passing through
$q \ell$ edges of the inverse image of $c$, we reach the node $b_{i_{1}p'}$,
where $p' = j + qd \:\:\:\mbox{mod}\:\:\:N$. Clearly, we will be back to the starting node for
the first time when $qd = 0 \:\:\mbox{mod}\:\:N$, for the smallest value of $q$.
(In that case by continuing the path we will
just go over the same cycle of length $q \ell$.) By definition, $q$ is the order of $d$ in $Z_N$,
and thus $q = k$. Since the order of any element of a finite group divides
the order of the group, $s = N/k$ is an integer. It can then be easily seen that
by starting from any of the nodes $b_{i_{1}1},\ldots,b_{i_1s}$,
we can partition the inverse image of $c$ into $s$ cycles of length $k \ell$ each. This completes the proof.
\end{proof}

In what follows, we refer to the value $k$ in Theorem~\ref{thm1}
as the {\em order} of cycle $c$, and use the notation ${\cal O}(c)$ to denote it.

\begin{corollary}
Consider the cyclic $N$-lifting $\tilde{G}$ of the Tanner graph $G$.
Suppose that $c$ is a cycle of length $\ell$ in $G$. The inverse image of
$c$ in $\tilde{G}$ is the union of non-overlapping cycles,
each strictly longer than $\ell$ iff ${\cal O}(c) > 1$; or equivalently,
iff the permutation index of $c$, given in~(\ref{eq2}), is nonzero.
\label{cor1}
\end{corollary}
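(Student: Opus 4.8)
The corollary is an immediate consequence of Theorem~\ref{thm1}, so the plan is to simply translate the structural statement of the theorem into the equivalent ``strictly longer'' condition. First I would invoke Theorem~\ref{thm1}: the inverse image of $c$ is the union of $N/k$ non-overlapping cycles, each of length $k\ell$, where $k = {\cal O}(c)$. Since the $N/k$ cycles are explicitly non-overlapping, the only thing left to control is the length $k\ell$ of each component relative to $\ell$.

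Next I would observe that, since $\ell \geq 1$ (in fact $\ell$ is even and $\ell \geq 4$ for a genuine cycle in a Tanner graph, but even $\ell \geq 1$ suffices), we have $k\ell > \ell$ if and only if $k > 1$. This gives the first equivalence: every cycle in the inverse image of $c$ is strictly longer than $\ell$ iff ${\cal O}(c) > 1$. For the second equivalence I would appeal to the description of $k$ as the order in $Z_N$ of the element $d$ given by~(\ref{eq2}): the order of a group element equals $1$ precisely when that element is the identity, i.e.\ $k = 1 \iff d = 0$. Hence ${\cal O}(c) > 1$ is equivalent to $d \neq 0$. Combining the two equivalences completes the argument. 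One should also note, as already established in the proof of Theorem~\ref{thm1}, that the two permutation indices $d$ and $d' = N - d \bmod N$ have the same order, so the condition ``the permutation index is nonzero'' is unambiguous regardless of which direction around $c$ is chosen.

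There is no real obstacle here; the content is entirely contained in Theorem~\ref{thm1} and the elementary fact about orders of group elements. The only point requiring a moment's care is making sure the non-overlap claim is carried over verbatim from the theorem so that ``union of non-overlapping cycles'' in the corollary is justified, and making explicit the trivial inequality $k\ell > \ell \iff k>1$ that underlies the whole reduction.

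\begin{proof}
By Theorem~\ref{thm1}, the inverse image of $c$ in $\tilde{G}$ is the union of $N/{\cal O}(c)$ non-overlapping cycles, each of length ${\cal O}(c)\cdot\ell$. Since $\ell \geq 1$, we have ${\cal O}(c)\cdot\ell > \ell$ if and only if ${\cal O}(c) > 1$; thus every cycle in the inverse image of $c$ is strictly longer than $\ell$ iff ${\cal O}(c) > 1$. Finally, ${\cal O}(c)$ is by definition the order in $Z_N$ of the element $d$ given in~(\ref{eq2}) (and, as noted in the proof of Theorem~\ref{thm1}, the element $d' = N-d \bmod N$ corresponding to the opposite direction has the same order). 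The order of a group element equals $1$ exactly when the element is the identity, i.e.\ ${\cal O}(c) = 1$ iff $d = 0$. Hence ${\cal O}(c) > 1$ iff the permutation index of $c$ is nonzero, which establishes the claim.
\end{proof}
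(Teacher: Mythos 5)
Your proof is correct and follows exactly the route the paper intends: the corollary is stated without a separate proof because it is an immediate consequence of Theorem~\ref{thm1}, and your reduction (non-overlap carries over verbatim, $k\ell>\ell \iff k>1$, and $k=1 \iff d=0$ since the identity is the unique element of order one in $Z_N$) is precisely that intended argument, stated cleanly.
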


\subsection{Intentional Edge Swapping (IES) Algorithm}
\label{subsec3.2}

Suppose that $T$ is the set of all dominant trapping sets, and $C(T)$ is the set of
all the cycles in $T$. We also use the notations $t$ and $C(t)$ for a trapping set and its constituent cycles,
respectively. For an edge $e$, we use $C^e(t)$ to denote the set of cycles in
the trapping set $t$ that include $e$. In the previous subsection, we proved that a cycle $c$ in the base
Tanner graph $G$ is mapped to the union of larger cycles in the cyclically lifted graph
$\tilde{G}$ iff ${\cal O}(c) > 1$. To eliminate the dominant trapping sets,
we are thus interested in assigning the edge permutation indices to the edges of $C(T)$
such that ${\cal O}(c) > 1$ for every cycle $c \in C(T)$. To achieve this, we order the trapping
sets in accordance with the increasing order of their critical number. We still denote this ordered set by $T$
with a slight abuse of notation. Note that $T$ may now include trapping sets with critical numbers
larger than the minimum one. We then go through the trapping sets in $T$ one at a time and
identify and list all the cycles involved in each trapping set in $C(T)$, i.e., $C(T)=\{c \in C(t), \forall t \in T\}$.
Note that $C(T)$ is also partially ordered based on the corresponding ordering of the trapping sets in $T$.
\begin{example}
\label{ex1}
Three typical trapping sets for Gallager A/B algorithms are shown in Fig.~1~\cite{CSV-06}.
The $(4,4)$ and $(5,3)$ trapping sets include one and three cycles of length 8, respectively,
while the $(4,2)$ trapping set has 2 cycles of length 6 and one cycle of length 8.
\end{example}

\begin{figure}[h]
\centering
\includegraphics[width = 0.54\textwidth]{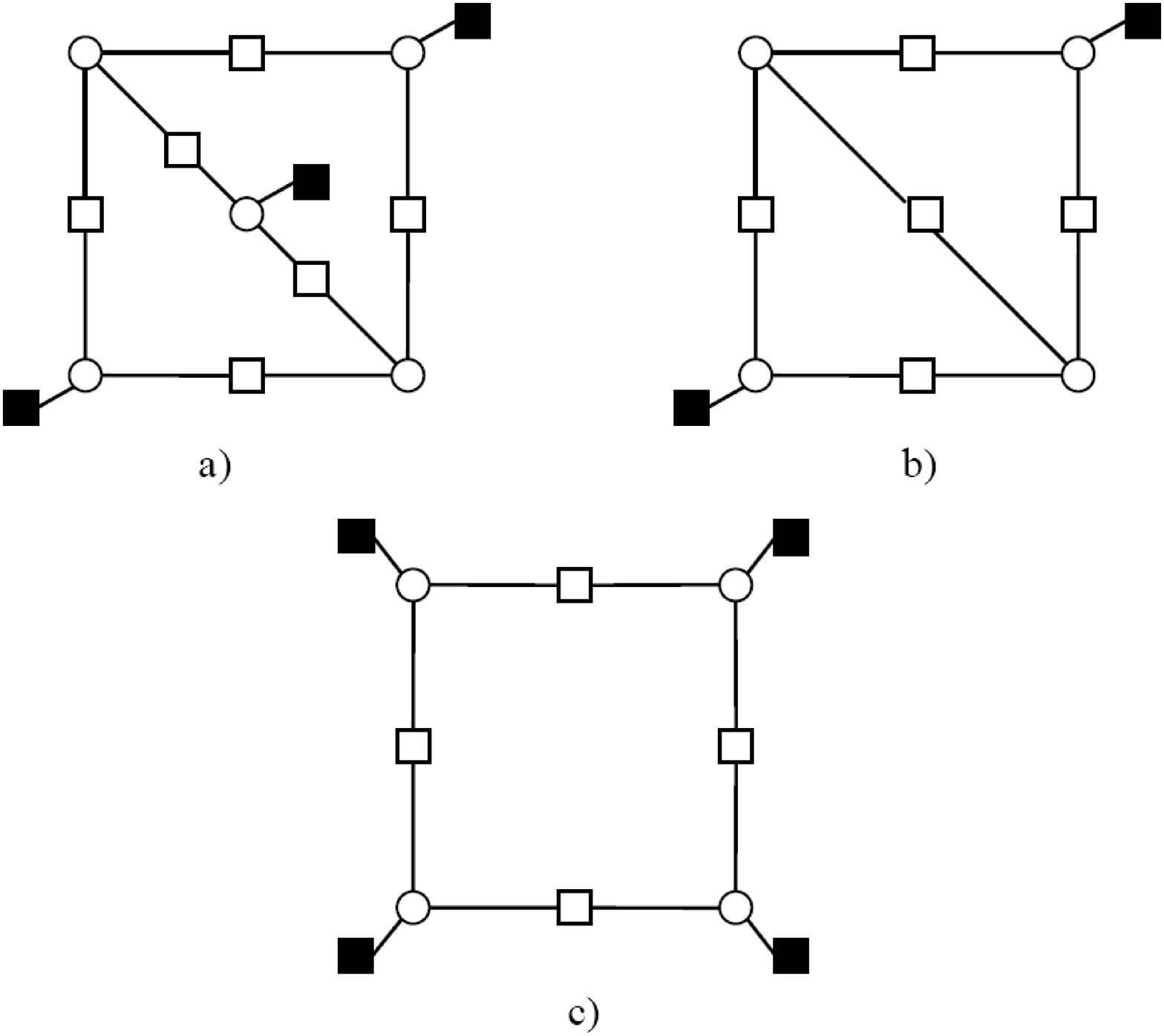}
\caption{a) (5, 3) Trapping set b) (4, 2) Trapping set c) (4, 4)
Trapping set. $\circ = $ Variable Node, $\square =$ Even degree
Check Node and $\blacksquare =$ Odd degree Check Node.}
\label{fig::trappingset}
\end{figure}

The next step is to choose proper edges of each trapping set to be swapped,
i.e., to choose the edges to which nonzero permutation indices are assigned.
In general, the policy is to select the minimum number of edges that can
result in ${\cal O}(c) > 1$ for every cycle $c$ in the trapping set $t$ under
consideration.
\begin{example}
\label{ex2}
Going back to Fig.~1, for the $(4,4)$ trapping set, it would be enough to just pick
one of the edges of the cycle of length 8 to eliminate this cycle in the lifted graph.
For the $(5,3)$ and $(4,2)$ trapping sets, however, at least two edges should be
selected for the elimination of all the cycles. A proper choice would be to select
one edge from the diagonal and the other edge from one of the sides.
\end{example}
Related to the edge selection, is the next step of permutation index assignment
to the selected edges such that ${\cal O}(c) > 1$ for every cycle $c \in C(t)$.
In general, we would like to have larger orders for the cycles. This in turn would result in
larger cycles in the lifted graph. To limit the complexity, however, we approach this
problem in a greedy fashion and with the main goal of just eliminating
all the cycles in $C(t)$, i.e., for each selected edge $e$, we choose the permutation index
such that all the cycles $C^e(t)$ have orders larger than one. This can
be performed by sequentially testing the values in the set $Z_{1 \rightarrow N}$.\footnote{More
complex search algorithms with the goal of increasing the order of cycles may be devised.
In this work however, no attempt has been made in this direction.}
As soon as such an index is found, we assign it to $e$ and move to the next
selected edge and repeat the same process.

We call the proposed algorithm {\em intentional edge swapping (IES)}
to distinguish it from ``random edge swapping," commonly used
to construct lifted codes and graphs. The pseudocode of the algorithm
is given as Algorithm 1. At the output of Algorithm 1, we have the sets
$SwappedSet$ and $IndexSet$, which contain the edges of the Tanner graph that
should be swapped, and their corresponding permutation indices,
respectively.

\begin{algorithm}[h]
1) \textbf{Initialization:} Create the ordered sets $T$ and $C(T)$. Select $N$.
$ProcessedSet = \emptyset$, $SwappedSet = \emptyset$, $IndexSet=\emptyset$.\\
2) Select the next trapping set $t \in T$.\\
3) $CurrentSet =$ edges of $C(t)$.\\
4) $CandidateSet = CurrentSet  \setminus ProcessedSet$. \\
5) If $CandidateSet = \emptyset$, go to Step 8.\\
6) Select the edges ${\cal E}$ from $CandidateSet$ that should be swapped, and assign
their permutation indices ${\cal I}$ from $Z_{1 \rightarrow N}$
such that ${\cal O}(c) > 1$ for every cycle $c$ in $C^e(t)$. \\
7) $SwappedSet = SwappedSet \:\cup\: {\cal E}$, $IndexSet = IndexSet \:\cup\: {\cal I}$,
and $ProcessedSet = ProcessedSet \:\cup\: CurrentSet$. Go to Step 12.\\
8) $CandidateSet = CurrentSet \setminus SwappedSet$. If $CandidateSet = \emptyset$, Stop.\\
9) Select an edge $e$ from $CandidateSet$ and assign a permutation index $i \in Z_{1 \rightarrow N}$ to it such that
for all cycles $c \in C^e(ProcessedSet \:\cup\: t)$, we have ${\cal O}(c) > 1$. If this is not feasible,
go to Step 11.\\
10) $SwappedSet = SwappedSet \cup e$, $IndexSet = IndexSet \cup i$,
and $CurrentSet = CurrentSet \setminus C^e(t)$. If $CurrentSet \neq \emptyset$,
go to Step 8. Otherwise, go to Step 12.\\
11) $CandidateSet = CandidateSet \setminus e$, If $CandidateSet \neq \emptyset$, go to Step 9.
Else, stop.\\
12) If all the trapping sets in $T$ are processed, stop. Otherwise, go to Step 2.
\caption{Intentional Edge Swapping (IES) Algorithm}
\label{alg1}
\end{algorithm}

In Algorithm~\ref{alg1}, the search for edges to be swapped and the
permutation index assignment to these edges are performed in two phases.
The first phase is in Steps 4 - 6, where any edge from
previously processed trapping sets is removed from the set of candidates for swapping.
If the first phase fails, in that no edge exists as a candidate for swapping
($CandidateSet = \emptyset$), then the algorithm switches to
the second phase in Steps 8 - 9, where only previously swapped edges
are removed from the candidate set for swapping.

The process of permutation index assignment in Algorithm~\ref{alg1} involves
the satisfaction of inequalities $d \neq 0$ for certain cycles,
where $d$ is given in (\ref{eq2}). In general, this is easier to achieve
if the variables involved in (\ref{eq2}) are selected from a larger alphabet space.
In fact, by increasing $N$, one can eliminate more trapping sets and achieve a
better performance in the error floor region.
\subsection{Minimum Distance and Rate of Cyclic Liftings}
\label{subsec3.3}
Consider an LDPC code $\cal{C}$ with an $m \times n$ parity-check matrix $H$.
To prove our results on the minimum distance and the rate of a cyclic $N$-lifting
$\tilde{\cal{C}}$ of $\cal{C}$, we consider an alternate parity-check matrix
$\tilde{H}'$ of $\tilde{\cal{C}}$ obtained by permutations of rows and columns of
matrix $\tilde{H}$ introduced in Subsection~\ref{subsec2.2}, as follows:
\begin{equation}
\label{eq3}
{\tilde{H}'}=
\left (\begin{array}{cccccc}
\mathcal{A}_{0} & \mathcal{A}_{N-1} &
\mathcal{A}_{N-2} & \ldots & \mathcal{A}_{2} & \mathcal{A}_{1}\\
\mathcal{A}_{1} & \mathcal{A}_{0} & \mathcal{A}_{N-1} &
\ldots & \mathcal{A}_{3}& \mathcal{A}_{2}\\
\vdots & \vdots & \vdots & \ddots & \vdots & \vdots\\
 \mathcal{A}_{N-2} & \mathcal{A}_{N-3} & \mathcal{A}_{N-4} & \ldots & \mathcal{A}_{0} & \mathcal{A}_{N-1}\\
\mathcal{A}_{N-1} & \mathcal{A}_{N-2} & \mathcal{A}_{N-3} & \ldots &
\mathcal{A}_{1} & \mathcal{A}_{0}
\end{array}
  \right)\:.
\end{equation}

In (\ref{eq3}), all the sub matrices $\mathcal{A}_{d},\: d = 0, \ldots, N-1$, have size
$m \times n$, and are given by
\begin{equation}
\label{eq5}
(i,j)\text{th entry of } \mathcal{A}_d= \left \{
\begin{array}{cc}
1 & \textrm{if $d = d_{ij}$} \\
0 & \quad \textrm{otherwise}, \\
\end{array}\right.
\end{equation}
where $d_{ij}$ is the permutation index corresponding to $h_{ij}$.
The parity-check matrix $\tilde{H}'$ is block circulant with the property that
\begin{equation}
\mathcal{A}_{0}+\mathcal{A}_{1}+\ldots+\mathcal{A}_{N-1} = H.
\label{eq8}
\end{equation}

\begin{theorem}
If code $\cal{C}$ has rate $r$, then the rate
$r^{(N)}$ of a cyclic $N$-lifting $\tilde{\cal{C}}$ of $\cal{C}$ satisfies $r^{(N)} \leq r$,
for $N = 2^q,\:q=1, 2, 3, \ldots$.
\label{thmz}
\end{theorem}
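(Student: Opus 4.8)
The plan is to prove the equivalent rank statement $\mathrm{rank}(\tilde H)\ge N\,\mathrm{rank}(H)$ over $GF(2)$. Indeed $\mathcal{C}$ has rate $r = 1-\mathrm{rank}(H)/n$ and $\tilde{\mathcal{C}}$ has rate $r^{(N)} = 1-\mathrm{rank}(\tilde H)/(nN)$, so that inequality is precisely $r^{(N)}\le r$. Since $\tilde H'$ of (\ref{eq3}) is obtained from $\tilde H$ by permutations of rows and columns, $\mathrm{rank}(\tilde H') = \mathrm{rank}(\tilde H)$, and I would work with $\tilde H'$. Viewing $\tilde H'$ as an $N\times N$ array of $m\times n$ blocks, (\ref{eq3}) says it is block circulant, its $(r,c)$ block being $\mathcal{A}_{(r-c)\bmod N}$; hence it can be written $\tilde H' = \sum_{d=0}^{N-1} W^{d}\otimes\mathcal{A}_d$, where $W$ is the $N\times N$ permutation matrix of a single $N$-cycle (the basic circulant, a power/transpose of $I^{(1)}$), so that $W^N = I_N$ and $\sum_{d}\mathcal{A}_d = H$ by (\ref{eq8}).

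The key step --- and the only place the hypothesis $N = 2^{q}$ is used --- is the elementary $GF(2)$ fact that $W$ is similar to a single Jordan block with eigenvalue $1$. Indeed $(W-I_N)^N = (W+I_N)^N = W^{N}+I_N = 0$ over $GF(2)$ because $\binom{N}{k}\equiv 0\pmod 2$ for $0<k<N=2^{q}$, while $\ker(W-I_N)$ is one-dimensional (it is spanned by the all-ones vector, since $Wv=v$ forces $v$ to be constant). So there is an invertible $N\times N$ matrix $T$ with $T^{-1}WT = I_N+J$, where $J$ is the nilpotent super-diagonal matrix. Conjugating $\tilde H'$ by $T\otimes I$ in each index preserves its rank, and using $(A\otimes B)(C\otimes D)=AC\otimes BD$,
\[
(T^{-1}\otimes I_m)\,\tilde H'\,(T\otimes I_n) \;=\; \sum_{d=0}^{N-1}(I_N+J)^{d}\otimes\mathcal{A}_d \;=\; \sum_{k=0}^{N-1} J^{k}\otimes\Bigl(\sum_{d=0}^{N-1}\binom{d}{k}\mathcal{A}_d\Bigr).
\]
The $k=0$ term is $I_N\otimes\bigl(\sum_d\mathcal{A}_d\bigr) = I_N\otimes H$, and every term with $k\ge 1$ is strictly block upper triangular because $J^{k}$ is. Thus the transformed matrix is block upper triangular (as an $N\times N$ array of $m\times n$ blocks) with all $N$ diagonal blocks equal to $H$; since the rank of a block upper triangular matrix is at least the sum of the ranks of its diagonal blocks, $\mathrm{rank}(\tilde H)=\mathrm{rank}(\tilde H')\ge N\,\mathrm{rank}(H)$, which is exactly what is needed.

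I expect the main obstacle to be modest and largely a matter of bookkeeping: writing $\tilde H'$ in the Kronecker form $\sum_d W^{d}\otimes\mathcal{A}_d$ with the correct orientation of $W$ (transposing $W$ is harmless, as $W^{T}$ is again a single $N$-cycle permutation matrix), and verifying that the conjugation really deposits $H$ --- rather than some partial sum of the $\mathcal{A}_d$ --- on every diagonal block. The one genuinely nontrivial ingredient is the $GF(2)$, $N=2^{q}$ fact that the basic circulant is a single Jordan block (equivalently, $x^{N}-1=(x-1)^{N}$ over $GF(2)$); it is worth noting explicitly that for other values of $N$ this fails, and so in general does the conclusion $r^{(N)}\le r$.
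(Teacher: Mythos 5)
Your proof is correct and reaches the same endpoint as the paper's --- a rank-preserving transformation of $\tilde H'$ into a block upper triangular matrix whose $N$ diagonal blocks all equal $H$, from which $\mathrm{rank}(\tilde H')\ge N\,\mathrm{rank}(H)$ and hence $r^{(N)}\le r$ follow --- but by a genuinely different route. The paper proceeds by explicit recursive block elimination: it writes $\tilde H'=\bigl(\begin{smallmatrix}\mathcal M&\mathcal N\\ \mathcal N&\mathcal M\end{smallmatrix}\bigr)$, adds the second block column to the first and the first block row to the second to obtain $\bigl(\begin{smallmatrix}\mathcal M+\mathcal N&\mathcal N\\ 0&\mathcal M+\mathcal N\end{smallmatrix}\bigr)$, observes that $\mathcal M+\mathcal N$ is again block circulant of half the size (and equals $H$ once the block size reaches $1$), and iterates this halving $q$ times to reach the triangular form in~(\ref{eqrf}). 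You instead package the same circulant structure as the Kronecker sum $\tilde H'=\sum_d W^d\otimes\mathcal A_d$ for a single-$N$-cycle permutation $W$, then conjugate by $T\otimes I$ where $T^{-1}WT=I_N+J$ is the Jordan form of $W$ over $GF(2)$; the expansion $\sum_d (I+J)^d\otimes\mathcal A_d=\sum_k J^k\otimes\bigl(\sum_d\binom{d}{k}\mathcal A_d\bigr)$ deposits $H=\sum_d\mathcal A_d$ on every diagonal block in a single algebraic step, with the $k\ge1$ terms supplying only the strictly upper part. What your version buys is conceptual transparency about the hypothesis $N=2^q$: it is exactly the condition making $x^N-1=(x+1)^N$ over $GF(2)$, i.e., making $W$ similar to one Jordan block (nilpotency of $W+I_N$ together with $\dim\ker(W+I_N)=1$), and you correctly flag that this is where the restriction enters. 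The paper's version is more elementary and self-contained (no appeal to Jordan normal form over a finite field), with the same restriction appearing only implicitly as the condition for the halving recursion to bottom out at $H$. Both arguments are valid; yours is a clean conceptual repackaging rather than a reproduction of the paper's computation.
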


\begin{proof}
Due to the block circulant structure of $\tilde{H}'$, it can be written as
\begin{equation}
\tilde{H}'= \left
(\begin{array}{c c} \mathcal{M} & \mathcal{N} \\ \mathcal{N} &
\mathcal{M}
\end{array} \right)\:,
\label{eqdr}
\end{equation}
where matrices $\mathcal{M}$ and $\mathcal{N}$ are given by
$$
\mathcal{M} = \left (\begin{array}{ccccc}
\mathcal{A}_{0} & \mathcal{A}_{N-1} &
 \ldots & \mathcal{A}_{\frac{N}{2}+2} & \mathcal{A}_{\frac{N}{2}+1}\\ \mathcal{A}_{1} & \mathcal{A}_{0} &
 \ldots & \mathcal{A}_{\frac{N}{2}+3}& \mathcal{A}_{\frac{N}{2}+2} \\ \vdots & \vdots & \ldots & \vdots & \vdots\\
\mathcal{A}_{\frac{N}{2}+1} &  \mathcal{A}_{\frac{N}{2}}  &
 \ldots & \mathcal{A}_{1}& \mathcal{A}_{0} \end{array}
  \right)_{\frac{N}{2}m\times \frac{N}{2}n},
$$
and
$$
\mathcal{N} = \left (\begin{array}{ccccc}
\mathcal{A}_{\frac{N}{2}} & \mathcal{A}_{\frac{N}{2}-1} &
 \ldots & \mathcal{A}_{2}&\mathcal{A}_{1}\\ \mathcal{A}_{\frac{N}{2}+1} & \mathcal{A}_{\frac{N}{2}} &
 \ldots &\mathcal{A}_{3}& \mathcal{A}_{2} \\ \vdots & \vdots & \ldots & \vdots & \vdots\\ \mathcal{A}_{N-1} &  \mathcal{A}_{N-2}  &
 \ldots & \mathcal{A}_{\frac{N}{2}+3}& \mathcal{A}_{\frac{N}{2}+2} \end{array}
  \right)_{\frac{N}{2}m\times \frac{N}{2}n},
$$
respectively. (Note that all indices $i$ of $\mathcal{A}_i$ should be interpreted
as modulo $N$.) Adding the second block column of (\ref{eqdr}) to the first, followed by adding
the first block row to the second, we have
\begin{equation}
 \left(\begin{array}{c c} \mathcal{M} & \mathcal{N} \\ \mathcal{N} &
\mathcal{M} \end{array} \right) \rightarrow  \left (\begin{array}{c
c} \mathcal{M}+ \mathcal{N} & \mathcal{N} \\ \mathcal{N}+\mathcal{M} &
\mathcal{M} \end{array} \right) \rightarrow  \left (\begin{array}{c
c} \mathcal{M}+\mathcal{N} & \mathcal{N} \\ 0 &
\mathcal{M}+\mathcal{N}
\end{array} \right)\:.
\label{eqac}
\end{equation}
For $N=2$, $\mathcal{M} + \mathcal{N} = H$, and since the rank of the matrix in (\ref{eqac}),
and thus the rank of $\tilde{H}'$,
is at least twice the rank of $H$, we have $r^{(2)} \leq r$, and the proof is complete.
For $N > 2$, it is easy to see that $\mathcal{M} + \mathcal{N}$ is also block circulant and can in turn
be partitioned into four sub matrices, each of size $\frac{N}{4}m \times \frac{N}{4}n$, as follows:
$$
\mathcal{M}+\mathcal{N}=\left
(\begin{array}{c c} \mathcal{M}' & \mathcal{N}' \\ \mathcal{N}' &
\mathcal{M}'\end{array} \right)\:.
$$
Replacing this in the rightmost matrix of (\ref{eqac}), and performing similar block operations
as in (\ref{eqac}), we obtain
$$
\left(\begin{array}{c | c} \begin{array}{c c} \mathcal{M}' & \mathcal{N}'\\ \mathcal{N}' & \mathcal{M}'\end{array} & \mathcal{N}
\\ \hline 0 & \begin{array}{c c} \mathcal{M}' & \mathcal{N}'\\
\mathcal{N}' & \mathcal{M}'\end{array}
\end{array} \right) \rightarrow
$$
\begin{equation}
\left(\begin{array}{c | c} \begin{array}{c c} \mathcal{M}'+\mathcal{N}' & \mathcal{N}'\\
0 & \mathcal{M}'+\mathcal{N}'\end{array} & \mathcal{P} \\
\hline 0 & \begin{array}{c c} \mathcal{M}'+\mathcal{N}' &
\mathcal{N}'\\ 0 & \mathcal{M}'+\mathcal{N}'\end{array}
\end{array} \right)\:.
\label{eqjh}
\end{equation}
For $N = 4$, $\mathcal{M}'+\mathcal{N}' = H$, and as the rank of the matrix in (\ref{eqjh})
is at least four times the rank of $H$, we have $r^{(4)} \leq r$, and the proof is complete.
For $N = 2^q > 4$, the same process of block column and row operations is repeated $q$ times
resulting in a block upper triangular matrix with the following structure
\begin{equation}
 \left (\begin{array}{ccccc}
H & \mathcal{B}_{1,2} &
 \ldots & \mathcal{B}_{1,N-1} & \mathcal{B}_{1,N}\\
0 & H &
\ldots & \mathcal{B}_{2,N-1} & \mathcal{B}_{2,N}\\
\vdots & \vdots & \cdots & \vdots & \vdots\\
0 & 0 & \ldots & H & \mathcal{B}_{N-1,N}\\
0 & 0 &
 \ldots & 0 & H
\end{array}
  \right)_{Nm\times Nn}\:.
\label{eqrf}
\end{equation}
As the rank of the above matrix, and thus the rank of $\tilde{H}'$,
is at least $N$ times the rank of $H$, we have $r^{(N)} \leq r$.
\end{proof}

\begin{corollary}
If matrix $H$ has full rank, then $r^{(N)} = r$, for $N = 2^q,\:q=1, 2, 3, \ldots$.
\label{corz}
\end{corollary}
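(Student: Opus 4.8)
\textbf{Proof proposal for Corollary~\ref{corz}.}
The plan is to squeeze the rank of the lifted parity-check matrix between the lower bound already obtained in the proof of Theorem~\ref{thmz} and the trivial upper bound given by its number of rows. Recall that the rate of a code with an $M\times Q$ parity-check matrix of rank $\rho$ is $1-\rho/Q$; in particular $r = 1 - \mathrm{rank}(H)/n$ and $r^{(N)} = 1 - \mathrm{rank}(\tilde{H}')/(Nn)$, since $\tilde{H}'$ is an $Nm\times Nn$ matrix describing $\tilde{\mathcal{C}}$.

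First I would invoke the key inequality established inside the proof of Theorem~\ref{thmz}: after the $q$ rounds of block row/column operations, $\tilde{H}'$ is row/column equivalent to the block upper triangular matrix in~(\ref{eqrf}) whose diagonal blocks are all equal to $H$, so $\mathrm{rank}(\tilde{H}') \ge N\,\mathrm{rank}(H)$ for $N = 2^q$. Next, observe that the hypothesis that $H$ has full rank means $\mathrm{rank}(H) = \min(m,n) = m$ (a parity-check matrix has at least as many columns as rows, so "full rank" is full row rank). Substituting gives $\mathrm{rank}(\tilde{H}') \ge Nm$.

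Finally, I would note the elementary upper bound $\mathrm{rank}(\tilde{H}') \le Nm$, because $\tilde{H}'$ has exactly $Nm$ rows. Combining the two bounds forces $\mathrm{rank}(\tilde{H}') = Nm$, hence
\[
r^{(N)} = 1 - \frac{Nm}{Nn} = 1 - \frac{m}{n} = r .
\]
There is no real obstacle here beyond correctly importing the rank inequality from Theorem~\ref{thmz}; the only point requiring a moment's care is the identification of "full rank of $H$'' with $\mathrm{rank}(H)=m$, which relies on $m\le n$ for a genuine parity-check matrix.
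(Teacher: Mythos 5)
Your argument is correct and takes essentially the same route as the paper: both rely on the block upper triangular form~(\ref{eqrf}) obtained in the proof of Theorem~\ref{thmz} to conclude that $\tilde{H}'$ inherits full row rank from $H$, and hence that the rate is preserved. The paper states the conclusion more tersely (``the matrix in~(\ref{eqrf}) is also full-rank, and so is $\tilde{H}'$''), whereas you make the two-sided bound $Nm \le \mathrm{rank}(\tilde{H}') \le Nm$ and the ensuing rate computation explicit, but the underlying idea is identical.
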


\begin{proof}
If $H$ has full rank, then the matrix in (\ref{eqrf}) is also full-rank,
and so is $\tilde{H}'$. This implies $r^{(N)} = r$.
\end{proof}

It is easy to find counter examples to demonstrate that Theorem~\ref{thmz}
and Corollary~\ref{corz} do not always hold for odd values of $N$ or even values
that are not integer powers of two.

\begin{theorem}
If code $\cal{C}$ has minimum distance $d_{min}$, then the minimum distance
$d_{min}^{(N)}$ of a cyclic $N$-lifting $\tilde{\cal{C}}$ of $\cal{C}$ satisfies $d_{min}\leq
d_{min}^{(N)} \leq N\;d_{min}$, for $N = 2^q,\:q=1, 2, 3, \ldots$.
\label{thmx}
\end{theorem}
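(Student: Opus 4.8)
The plan is to prove the two inequalities separately: the upper bound holds for every $N$ and follows from an explicit construction, while the lower bound is where the hypothesis $N=2^{q}$ is genuinely needed, and I would establish it by a folding argument and induction on $q$.

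For the upper bound $d_{min}^{(N)}\le N\,d_{min}$, I would lift a minimum-weight codeword of $\mathcal{C}$ ``diagonally.'' Starting from a codeword $x=(x_{1},\dots,x_{n})$ of $\mathcal{C}$ with $\mathrm{wt}(x)=d_{min}$, define $\tilde{x}$ by setting all $N$ copies of the variable node $b_{j}$ equal to $x_{j}$. Each block $[\tilde{H}]_{ij}$ is either the zero matrix or a circulant permutation matrix, and every permutation matrix fixes the all-ones vector, so the check equations of $\tilde{\mathcal{C}}$ in row-block $i$ applied to $\tilde{x}$ all collapse to $(Hx)_{i}=0$. Hence $\tilde{x}$ is a nonzero codeword of $\tilde{\mathcal{C}}$ of weight $N\,\mathrm{wt}(x)=N\,d_{min}$, giving the bound.

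For the lower bound $d_{min}^{(N)}\ge d_{min}$, I would induct on $q$; the case $q=0$, where $\tilde{\mathcal{C}}=\mathcal{C}$, is trivial. Index the $N$ copies by $Z_{N}$; from the structure of $\tilde{H}$ a word $\tilde{x}$ lies in $\tilde{\mathcal{C}}$ iff $\bigoplus_{j:\,h_{ij}=1}\tilde{x}_{j,\,l+d_{ij}}=0$ for every check $c_{i}$ and every $l\in Z_{N}$ (copy indices mod $N$). Given a nonzero $\tilde{x}\in\tilde{\mathcal{C}}$, write $N=2M$ with $M=2^{q-1}$ and fold $\tilde{x}$ into a word $y$ indexed by $j$ and $l\in Z_{M}$ via $y_{j,l}=\tilde{x}_{j,l}\oplus\tilde{x}_{j,l+M}$. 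Since shifting all copy indices by $d_{ij}$ modulo $N$ permutes the $M$ pairs $\{l,l+M\}$ among themselves, adding the check equation of $\tilde{x}$ at copy $l$ to the one at copy $l+M$ shows $y$ is a codeword of the cyclic $M$-lifting $\tilde{\mathcal{C}}_{M}$ of $\mathcal{C}$ whose matrix of edge permutation indices is $D$ reduced modulo $M$ (a legitimate cyclic $2^{q-1}$-lifting of $\mathcal{C}$). If $y\ne 0$, the induction hypothesis gives $\mathrm{wt}(y)\ge d_{min}$; since each $(j,l)$ with $y_{j,l}=1$ singles out a distinct pair of coordinates of $\tilde{x}$ that contains at least one $1$, we obtain $\mathrm{wt}(\tilde{x})\ge\mathrm{wt}(y)\ge d_{min}$. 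If $y=0$, then $\tilde{x}$ is periodic of period $M$ in its copy index, its restriction $z$ to the copies in $Z_{M}$ is a nonzero codeword of $\tilde{\mathcal{C}}_{M}$, and $\mathrm{wt}(\tilde{x})=2\,\mathrm{wt}(z)\ge 2d_{min}\ge d_{min}$, again by the induction hypothesis. Together with the upper bound this proves the theorem.

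The main obstacle is the case $y=0$ of the induction: the naive projection of $\tilde{x}$ to $\mathcal{C}$ formed from the parities of each variable node's copies can vanish (exactly when every variable node has an even number of active copies), so one must instead fold and recurse; the hypothesis $N=2^{q}$ enters precisely here, since it guarantees that the folded code $\tilde{\mathcal{C}}_{M}$ is again a cyclic lifting of $\mathcal{C}$ of degree a power of two, which is what lets the induction close. I would also note that one cannot simply reuse the block row and column reductions from the proof of Theorem~\ref{thmz}: the column operations used there are not weight-preserving, and the folding map above is their weight-aware substitute. (Equivalently, the lower bound can be recast over the local ring $\mathbb{F}_{2}[y]/(y^{2^{q}}-1)=\mathbb{F}_{2}[y]/(y-1)^{2^{q}}$: a nonzero lifted codeword, viewed as a length-$n$ vector over this ring, has a well-defined $(y-1)$-adic valuation $j<2^{q}$, and factoring out $(y-1)^{j}$ and evaluating at $y=1$ — which by~(\ref{eq8}) carries $\mathcal{A}_{0}+\cdots+\mathcal{A}_{N-1}$ to $H$ — produces a nonzero codeword of $\mathcal{C}$ supported inside the support of $\tilde{x}$.)
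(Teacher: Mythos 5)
Your proof is correct, and for the lower bound it takes a genuinely cleaner route than the paper. The upper bound (lift a minimum-weight codeword diagonally, use $\mathcal{A}_0+\cdots+\mathcal{A}_{N-1}=H$ from~(\ref{eq8})) is the same as the paper's. For the lower bound, the paper also starts by summing the $N$ sub-blocks of a minimum-weight lifted codeword to project into $\mathcal{C}$, and then handles the degenerate case $\underline{c}=\underline{0}$ by introducing the running sums $\underline{c}'_i = \sum_{j=i}^{i-1+N/2}\underline{c}_{j\bmod N}$, showing this produces a period-$N/2$ lifted codeword, extracting another subset sum $\underline{c}_0+\underline{c}_2+\cdots+\underline{c}_{N-2}$, and asserting that ``the other steps are similar'' to iterate down to a base case — so the paper's argument for general $N=2^q$ is only sketched. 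Your fold $y_{j,l}=\tilde{x}_{j,l}\oplus\tilde{x}_{j,l+N/2}$ replaces this in a single clean inductive step: you correctly verify that $y$ is a codeword of the cyclic $(N/2)$-lifting with index matrix $D\bmod N/2$ (adding the check equation at copy $l$ to that at copy $l+N/2$ does exactly this, and the pairs $\{l,l+N/2\}$ are permuted among themselves by any cyclic shift), that $\mathrm{wt}(\tilde{x})\ge\mathrm{wt}(y)$ because each nonzero entry of $y$ certifies at least one nonzero entry of $\tilde{x}$ in a disjoint pair, and that if $y=0$ then $\tilde{x}$ is $N/2$-periodic with its restriction a nonzero codeword of the same $(N/2)$-lifting of at most half the weight. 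Either way the induction on $q$ closes, with the $q=0$ base trivial. This buys a fully rigorous argument where the paper leaves a gap, and it isolates exactly where $N=2^q$ is used: the fold of a cyclic $N$-lifting is again a cyclic lifting of $\mathcal{C}$ only because $N/2$ divides $N$ and reduction mod $N/2$ sends circulant indices to circulant indices. Your observation that the block row/column operations in the proof of Theorem~\ref{thmz} cannot be reused here because column operations do not preserve Hamming weight is also correct and worth stating. The parenthetical reformulation over $\mathbb{F}_2[y]/(y-1)^{2^q}$ is a valid and rather elegant alternative: the $(y-1)$-adic valuation argument gives the lower bound in one stroke and makes the role of the local-ring structure (hence of $N$ being a power of two) completely transparent.
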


\begin{proof}
We first prove the lower bound. Consider a codeword $\underbar{c}^{(N)}$ with minimum Hamming weight
$d_{min}^{(N)} = \mathcal{W}_{h}(\underbar{c}^{(N)})$
in $\tilde{\cal{C}}$. Let $\underbar{c}^{(N)} = (\underbar{c}_{0},\underbar{c}_{1},\ldots,\underbar{c}_{N-1})$,
where the subvectors $\underbar{c}_{i},\:i=0, 1,\ldots, N-1$, are of size $n$ each.
Based on $\tilde{H}' \underbar{c}^{(N)} = \underbar{0}$, we have
\begin{equation}
\sum_{i=0}^{N-1} \mathcal{A}_{N-i+j\:\mbox{mod}\:N} \: \underbar{c}_{i} = \underbar{0}\:,
\:\:\: j = 0, \ldots, N-1\:.
\label{eqy}
\end{equation}
Adding all the equations in (\ref{eqy}) for different values of $j$,
and exchanging the order of summations over $i$ and $j$, we obtain
\begin{equation}
\sum_{i=0}^{N-1} \left(\sum_{j=0}^{N-1} \mathcal{A}_{N-i+j\:\mbox{mod}\:N}\right) \underbar{c}_{i}
= \sum_{i=0}^{N-1} \left(\sum_{k=0}^{N-1} \mathcal{A}_{k}\right) \underbar{c}_{i}
= \underbar{0}\:.
\label{eqfd}
\end{equation}
Based on (\ref{eq8}), this implies that $\underbar{c}=\underbar{c}_{0} + \underbar{c}_{1} + \ldots + \underbar{c}_{N-1}$
is a codeword of $\cal{C}$. Moreover, $\mathcal{W}_{h}(\underbar{c}) \leq \mathcal{W}_{h}(\underbar{c}^{(N)}) = d_{min}^{(N)}$.
If $\underbar{c} \neq \underbar{0}\:$, this means $d_{min} \leq d_{min}^{(N)}$.

If $\underbar{c} = \underbar{0}\:$ and $N=2$, then $\underbar{c}_0 = \underbar{c}_1$.
This along with $\tilde{H}' \underbar{c}^{(2)} = \underbar{0}$ results in
$H \underbar{c}_0  = \underbar{0}$, and thus $\underbar{c}_0 \in \cal{C}$.
Since $\underbar{c}_0  \neq \underbar{0}$, this implies $d_{min} \leq  \mathcal{W}_{h}(\underbar{c}_0)
= d_{min}^{(2)} /2 < d_{min}^{(2)}$ and the proof is complete.

If $\underbar{c} = \underbar{0}\:$ and $N > 2$, through a number of steps, we demonstrate that either
there exists a subset $s$ of $Z_{0 \rightarrow N-1}$, such that
the vector $\underbar{v} = \sum_{i \in s} \underbar{c}_{i}$ is nonzero and is in $\cal{C}$,
or $\underbar{c}_{i} = \underbar{c}_{i+N/2}\:,\:\:\:\mbox{for} \:\:
i = 0, 1, \ldots, N/2-1\:.$ For the former case, $d_{min} \leq \mathcal{W}_{h}(\underbar{v})
\leq \mathcal{W}_{h}(\underbar{c}_{s}) \leq
\mathcal{W}_{h}(\underbar{c}^{(N)}) = d_{min}^{(N)}$,
where $\underbar{c}_{s}$ is defined as the
vector of size $|s|\cdot n$ obtained by the concatenation
of vectors $\underbar{c}_{i},\: i \in s$. This proves the lower bound.
For the latter case, the problem is reduced to
that of a lifting with degree $N/2$. Iterating the same process, we either
find a nonzero vector of $\cal{C}$ with Hamming weight less than
$d_{min}^{(N)}$ or reach to a point where all the constituent vectors
$\underbar{c}_{0}, \underbar{c}_{1}, \ldots, \underbar{c}_{N-1}$ of $\underbar{c}$ are equal.
In this case, vector $\underbar{v} = \underbar{c}_{0} =  \underbar{c}_{1} = \cdots =
\underbar{c}_{N-1}$ is nonzero and is in $\cal{C}$.
We thus have $d_{min} \leq \mathcal{W}_{h}(\underbar{v}) = d_{min}^{(N)}/N < d_{min}^{(N)}$.

Here, we explain the first step for demonstrating that when
$\underbar{c} = \underbar{0}\:$ and $N > 2$, either
there exists a subset $s$ of $Z_{0 \rightarrow N-1}$, such that
the vector $\underbar{v} = \sum_{i \in s} \underbar{c}_{i}$ is nonzero and is in $\cal{C}$,
or $\underbar{c}_{i} = \underbar{c}_{i+N/2}\:,\:\:\:\mbox{for} \:\:
i = 0, 1, \ldots, N/2-1\:.$ The other steps are similar and omitted to avoid
redundancy. (Note that the first step suffices to prove the claim for $N = 4$.
For larger values of $N$ further steps are required.)
Consider the vectors
$\underbar{c}'_{i},\:i=0, 1, \ldots, N-1$, defined by
\begin{equation}
\underbar{c}'_{i} = \sum_{j=i}^{i-1+N/2} \underbar{c}_{j\:\mbox{mod}\:N}\:.
\label{eqsg}
\end{equation}
Using equations (\ref{eqy}), it is easy to see that vector
$\underbar{c}'^{(N)} = (\underbar{c}'_{0},\underbar{c}'_{1},\ldots,\underbar{c}'_{N-1})$
satisfies $\tilde{H}' \underbar{c}'^{(N)} = \underbar{0}$, and is therefore in $\tilde{\cal{C}}$.
Moreover, from $\underbar{c} = \underbar{0}\:$, we have
\begin{equation}
\underbar{c}'_{i} = \underbar{c}'_{i+N/2}\:,\:\:\:\mbox{for} \:\:
i = 0, \ldots, N/2-1\:.
\label{eqw}
\end{equation}
Applying this to equation $\tilde{H}' \underbar{c}'^{(N)} = \underbar{0}$,
we obtain
\begin{equation}
\sum_{i=0}^{N/2-1} (\mathcal{A}_{N-i+j\:\mbox{mod}\:N} + \mathcal{A}_{N/2-i+j\:\mbox{mod}\:N}) \: \underbar{c}'_{i} = \underbar{0}\:,
\:\:\: j = 0, \ldots, N/2-1\:.
\label{eqyy}
\end{equation}
Adding the equations in (\ref{eqyy}) for different values of $j$ and switching the summations
with respect to $i$ and $j$, we obtain
\begin{equation}
\left(\sum_{i=0}^{N/2-1} \underbar{c}'_{i}\right) \left(\sum_{k=0}^{N-1} \mathcal{A}_{k}\right)
= \underbar{0}\:,
\label{equr}
\end{equation}
which implies that the vector $\underbar{c}'=\underbar{c}'_{0} + \underbar{c}'_{1} + \cdots + \underbar{c}'_{N/2-1}$
is a codeword of $\cal{C}$. On the other hand, using definition (\ref{eqsg}),
we have $\underbar{c}'= \underbar{c}_{0} + \underbar{c}_{2} + \cdots + \underbar{c}_{N-2}$.
(The subset $s$ in this case is $\{0, 2, \ldots, N-2\}$.)
Thus $\mathcal{W}_{h}(\underbar{c}') \leq \mathcal{W}_{h}((\underbar{c}_{0},\underbar{c}_{2},\ldots,\underbar{c}_{N-2}))
\leq \mathcal{W}_{h}(\underbar{c}^{(N)})$, implying $d_{min} \leq d_{min}^{(N)}$ if $\underbar{c}' \neq \underbar{0}$.
If $\underbar{c}' = \underbar{0}$, then $\sum_{i \in s} \underbar{c}_{i} = \underbar{0}$
over both the even and odd subsets $s$ of $Z_{0 \rightarrow N-1}$. For
$N = 4$, this means $\underbar{c}_0 = \underbar{c}_2$ and $\underbar{c}_1 = \underbar{c}_3$.
Replacing these in (\ref{eqy}), we have
$$
\left
(\begin{array}{c c} \mathcal{A}_0 + \mathcal{A}_2  & \mathcal{A}_1 + \mathcal{A}_3 \\
\mathcal{A}_1 + \mathcal{A}_3 & \mathcal{A}_0 + \mathcal{A}_2
\end{array} \right)
\left
(\begin{array}{c} \underbar{c}_0 \\
\underbar{c}_1
\end{array} \right) = \underbar{0}\:.
$$
Now the problem is reduced to that of $N = 2$, which means
$\underbar{v} = \underbar{c}_0 + \underbar{c}_1 \in \cal{C}$ and either
$\underbar{v} \neq \underbar{0}$, or $\underbar{c}_0 = \underbar{c}_1$.
In both cases, the lower bound is proved.

To prove the upper bound, consider a codeword $\underbar{c} \in \cal{C}$
with $\mathcal{W}_{h}(\underbar{c}) = d_{min}$. Define a vector $\underbar{c}^{(N)}$
of size $n \cdot N$ by $\underbar{c}^{(N)} = (\underbar{c}, \ldots, \underbar{c})$.
It is easy to see based on (\ref{eq8}) that $\underbar{c}^{(N)}$ satisfies
$\tilde{H}' \underbar{c}^{(N)} = \underbar{0}$ and is thus in the
cyclic $N$-lifting $\tilde{\cal{C}}$ of $\cal{C}$. We therefore
have $d_{min}^{(N)} \leq \mathcal{W}_{h}(\underbar{c}^{(N)}) =
N \cdot \mathcal{W}_{h}(\underbar{c}) = N \cdot d_{min}$.
\end{proof}

It is important to note that Theorems 1 and 2 of~\cite{ICV-08} are special cases of
Theorems~\ref{thmz} and \ref{thmx} of this paper, respectively, where $N = 2^q = 2$.

\section{NUMERICAL RESULTS}
\label{sec::simul}
In this section, we apply the IES algorithm of Subsection~\ref{subsec3.2}
to three LDPC codes to eliminate their dominant trapping sets over the BSC.
The codes are: the $(155,64)$ Tanner code~\cite{TSF-01},
a $(504,252)$ randomly constructed regular code~\cite{Mackaycodes},
and an optimized $(200,100)$ randomly constructed irregular code.

\begin{example}
For the $(155,64)$ Tanner code under Gallager B algorithm, the most dominant trapping set is
the $(5,3)$ trapping set, shown in Fig.~\ref{fig::trappingset}, with critical number 3.
We apply the IES algorithm to this code to design cyclic $N$-liftings for
$N = 2, 3, 4,$ and $5$. The FER curves of the designed codes are presented
in Fig.~\ref{fig::tanbsc} along with the FER
of the base code.

A careful inspection of Fig~\ref{fig::tanbsc} shows that using a
2-lifting, the slope of the curve changes from 3 to 4, an indication that
all $(5,3)$ trapping sets are eliminated. In this case,
$(4,4)$ trapping sets play the dominant role. Further increase of $N$ to 3 and then
4, only causes a downward shift of the curve (with no change of slope),
an indication that the minimal critical number remains at 4 for the 2 lifted codes
and increasing the degree of lifting just reduces the number of $(4,4)$
trapping sets. Increasing $N$ to 5 however, eliminates all the
$(4,4)$ trapping sets and the slope of the FER curve
further increases to 5. The dominant trapping sets for the 5-lifting
are $(5,5)$ trapping sets.

It is important to note that for $N = 2$, the performance of the
designed code is practically identical to that of the code designed in Example~3
of~\cite{ICV-08} based on a 2-cover of the Tanner code. There are however no
results reported in~\cite{ICV-08} for covers of larger degree.

For comparison, we have also included in Fig~\ref{fig::tanbsc}, the FER
of a random 5-lifting of the Tanner code. As can be seen, the error floor performance of this
code is significantly worse than that of the designed 5-lifting. In particular, the slope of the random
lifting is just 4 versus 5 for the designed lifting.

The code rates of the designed $N$-liftings are: 0.4065, 0.4043, 0.4032,
and 0.4026, for N = 2 to 5, respectively. The small decrease in the code rate by increasing the
degree of lifting is a consequence of the fact that the original
parity-check matrix of the Tanner code is not full rank. The rate
of the Tanner code itself is $0.4129$.

It is also worth noting that while the girth of the $N$-liftings, $N =2, 3, 4$, remains
the same as that of the Tanner code, i.e., $g = 8$,
for the 5-lifting, the girth is increased to 10.
\label{Tan-BSC}
\end{example}

\begin{figure}[h]
\centering
\includegraphics[width = 0.5\textwidth]{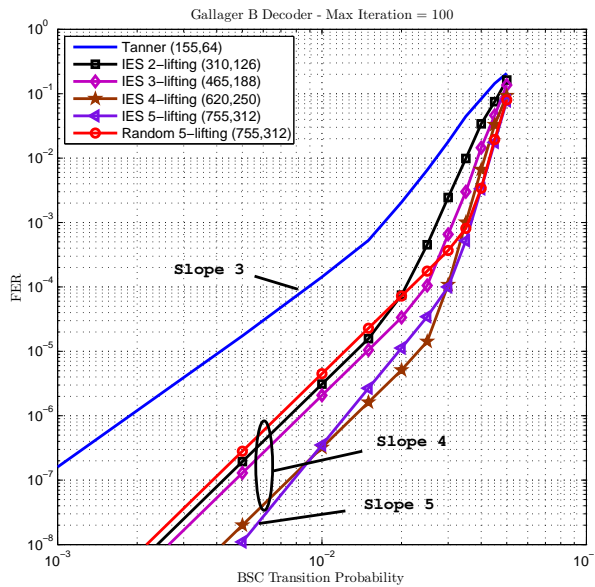}
\caption{Comparison of the FER performance of the Tanner code and
its liftings over the BSC (Example~\ref{Tan-BSC}).}
\label{fig::tanbsc}
\end{figure}

\begin{example}
In this example, we consider a regular $(504,252)$ code
from~\cite{Mackaycodes} decoded by Gallager B algorithm. The dominant trapping sets in this case
have critical number 3 and include $(3,3), (4,2)$, and $(5,3)$ trapping sets among others.
The IES algorithm is used to design cyclic $N$-liftings of this code
for $N = 2$ to 6. The FER results of the liftings and the base code
are reported in Fig.~\ref{fig::macbsc}. Again, the performance of the
$2$-lifting is similar to that of the code designed in~\cite{ICV-08}.
All $(3,3)$ trapping sets are eliminated in the 2-lifting, but the
survival of other trapping sets with critical number 3 keeps the
minimal critical number at 3, and thus no change of FER slope compared to the base code
is attained. Increasing $N$ to 3, however, eliminates all the trapping sets with
critical number 3 and changes the slope of the FER to 4. The dominant trapping sets in this
case are $(4,4)$ sets. Further increase of $N$ to 4 and 5 only reduces the number
of $(4,4)$ trapping sets and thus results in a downward shift of the FER curve.
For $N=6$, the algorithm can eliminate all the $(4,4)$ trapping sets, and thus
increases the slope of the FER curve to 5. The dominant trapping sets
in this case are $(5,5)$ sets.

For comparison, in Fig.~\ref{fig::macbsc}, we have also shown the performance of
a random 6-lifting of the $(504,252)$ code. As can be seen the performance of this
code in the error floor region is far poorer than that of the designed $6$-lifting.
In particular, the slope of the FER curve for this code is only 3 versus 5
for the designed code.

In this example, the parity-check matrix of the base code is full-rank,
and all the liftings have the same rate of 0.5 as the base code.

Noteworthy is that while the 2-lifting has the same girth of $g=6$ as the base code,
the girth for $N$-liftings, $N$ = 3 to 6, is increased respectively to 8, 8, 8 and 10.
\label{Mc-BSC}
\end{example}

\begin{figure}[t]
\centering
\includegraphics[width = 0.5\textwidth]{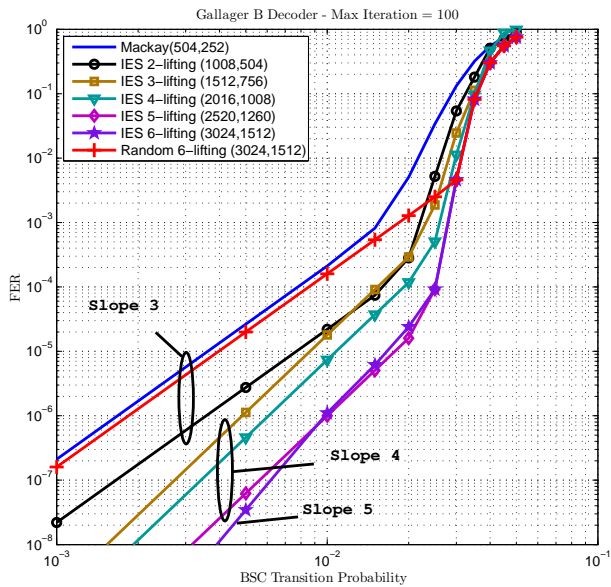}
\caption{Comparison of the FER performance of the regular
$(504,252)$ code and its liftings over the BSC
(Example~\ref{Mc-BSC}).} \label{fig::macbsc}
\end{figure}

\begin{example}
In this example, we consider a randomly constructed rate-1/2 irregular $(200,100)$ code
as the base code.
The degree distributions for this code, optimized for Gallager A algorithm
over the BSC~\cite{BRU-04}, are $\lambda(x) = 0.1115 x^2 + 0.8885 x^3$
and $\rho(x) = 0.26 x^6 + 0.74 x^7$. The code has $g = 6$. This code has a wide variety
of dominant trapping sets under Gallager A algorithm, all with critical number 3.

We apply the
IES algorithm to this code to design a cyclic $13$-lifting
of length $2600$, rate 0.5 and $g=6$. The FER curves of the lifted code and the base code
are presented in Fig.~\ref{fig::irrBSC}. As can be seen, the lifted code has a much better
error floor performance compared to the base code. In fact, the minimum critical number
for the lifted code is 5 versus 3 for the base code. For comparison, a rate-1/2 code of block length $2600$
with the same degree distribution and $g = 6$ is constructed. The performance of this code
is also given in Fig.~\ref{fig::irrBSC}. Clearly the performance of the lifted code is
significantly better in the error floor region. In particular, the slope of the FER curve
for the random code is the same as the base code and much less than that of the lifted code.
\label{irr-BSC}
\end{example}

\begin{figure}[h]
\centering
\includegraphics[width = 0.5\textwidth]{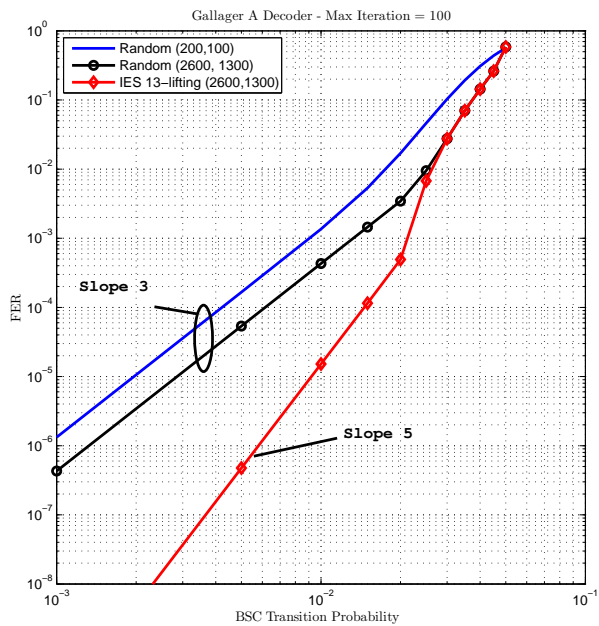}
\caption{FER curves of the irregular $(200,100)$ code, its IES
cyclic 13-lifting and a random irregular $(2600,1300)$ code with the
same degree distribution over the BSC (Example~\ref{irr-BSC}).}
\label{fig::irrBSC}
\end{figure}

\begin{example}
It is known that codes designed for a certain channel/decoding algorithm
would also perform well for other channel/decoding algorithms~\cite{FFR-06}. In this example,
we show that cyclically lifted codes designed for Gallager B algorithm in Examples
\ref{Tan-BSC} and \ref{Mc-BSC} also perform very well on the binary-input AWGN
channel under min-sum algorithm. The FER results for the 5-lifting of the Tanner code and
the 6-lifting of the MacKay code are reported in Figures \ref{fig::tanawgn} and \ref{fig::macawgn}, respectively.
In each figure, the performance of the corresponding base code
and a similar random code (same block length and degree distributions)
is also presented. One can see that at high SNR values, the designed codes perform
far superior to the corresponding base codes and random codes. In particular, they show no sign of
error floor for FER values down to about $10^{-8}$, and their FER decreases
at a much faster rate compared to the base codes and random codes.
\label{AWGN}
\end{example}

\begin{figure}[h]
\centering
\includegraphics[width = 0.5\textwidth]{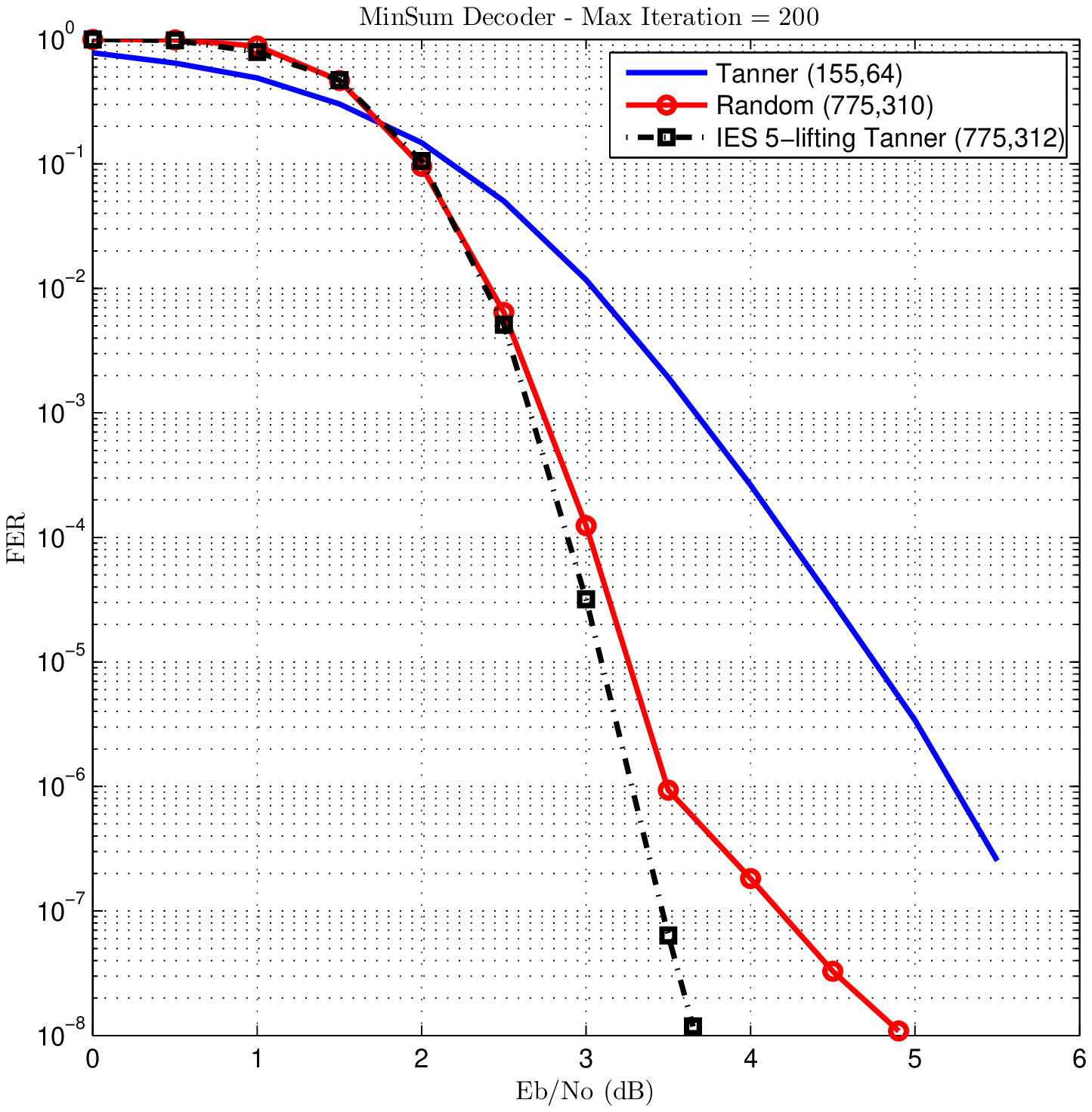}
\caption{FER performance of the Tanner code, its IES cyclic
5-lifting and a random $(775,310)$ code with the same degree
distribution over the BIAWGN channel (Example~\ref{AWGN}).}
\label{fig::tanawgn}
\end{figure}

\begin{figure}[h]
\centering
\includegraphics[width = 0.5\textwidth]{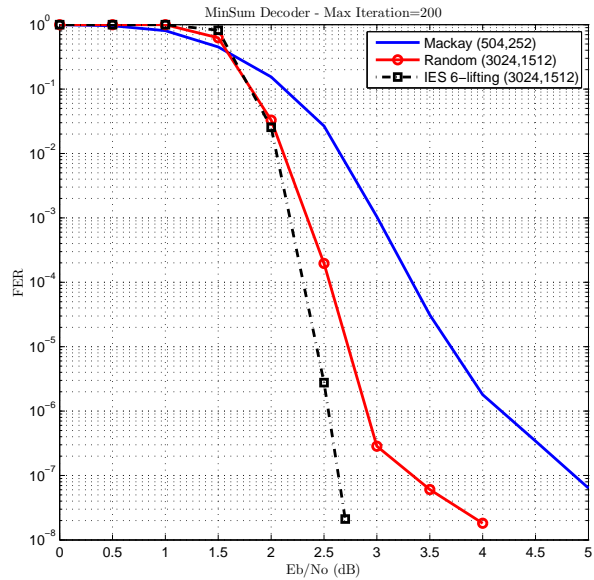}
\caption{FER performance of the regular $(504,252)$ code, its IES
cyclic 6-lifting and a random $(3024,1512)$ code with the same
degree distribution over the BIAWGN channel (Example~\ref{AWGN}).}
\label{fig::macawgn}
\end{figure}

\section{CONCLUSIONS}
\label{sec::conclusion} In this work, cyclic liftings are proposed to
improve the error floor performance of LDPC codes.
The liftings are designed to eliminate the dominant trapping sets of
the code by eliminating their constituent short cycles. The design
approach is universal in that it can be applied to any decoding algorithm
over any channel, as long as the dominant trapping sets are known and available.
In addition, the liftings have the same degree distribution as the
base code and are implementation friendly due to their cyclic
structure. For base codes with full-rank parity-check matrices,
the liftings also have the same rate as the base code and the performance improvement
is achieved at the expense of larger block length. Compared to
random codes or random liftings with the same block length and degree
distribution, the designed codes perform significantly better in the error floor region.

While the cyclic liftings in this work were designed for
Gallager A/B algorithms over the BSC, they also performed very
well over the BIAWGN channel. In particular, the designed codes substantially outperformed
similar random codes in the high SNR region.

\section*{Acknowledgment}
The first and the third authors would like to thank Dr. Hassan Haghighi from Mathematics
Department of K. N. Toosi University of Technology for helpful discussions
on the material presented in Subsection~\ref{subsec3.1} of the paper.

\end{document}